\newtheorem{lemma}{Lemma}[section]
\theoremstyle{definition}
\newtheorem{definition}{Definition}[section]
\title{Interval-permutation segment graphs}
\author{
  \bf{Zlatko~Joveski} \\
  \texttt{zlatko.joveski@vanderbilt.edu} \\ \\ 
  \qquad
  \bf{Jeremy P.~Spinrad} \\
  \texttt{jeremy.p.spinrad@vanderbilt.edu} \\ \\
  Department of Electrical Engineering and Computer Science\\
  Vanderbilt University\\
  Nashville, TN 37235\\
}
\begin{document}
\maketitle

\begin{abstract}
In this work we introduce the \textit{interval permutation segment (IP-SEG)} model that naturally generalizes the geometric intersection models of interval and permutation graphs. We study properties of two graph classes that arise from the IP-SEG model and present a family of forbidden subgraphs for these classes. In addition, we present polynomial algorithms for the clique and independent set problems on these classes, when the model is given as part of the input.  
\end{abstract}

\keywords{Permutation graphs \and Interval graphs \and Geometric intersection model}


\section{Introduction}
\label{sec:intro}

Many important graph classes are defined or can be characterized by a geometric intersection model. These characterizations can arise from different geometric objects that are being intersected. Examples include boxicity graphs (intersection graphs of $d$-dimensional rectangles), circular-arc graphs (intersection graphs of arcs along a circle), and string graphs (intersection graphs of curves in the plane) \cite{Perfect, Efficient}. Two particularly well-studied examples are the classes of interval and permutation graphs \cite{IntervalFishburn}. In both of their respective models, the intersecting objects are line segments in the plane, with different restrictions imposed on their positions. In interval graphs, the line segments must lie on a single line, while in permutation graphs, their endpoints must lie on two separate parallel lines. Because of the similarity, it is natural to look for geometric intersection models that would generalize those of interval and permutation graphs. One approach is to have geometric objects that generalize line segments. In the model of \textit{simple triangle} graphs (also known in the literature as \textit{point-interval}) \cite{SimpleTriangle, PointInterval}, the intersecting objects are triangles, while in the model of \textit{trapezoid} graphs (also referred to as \textit{interval-interval}) \cite{Trapezoid, PointInterval}, the intersecting objects are trapezoids. 

Another way of generalizing the models of interval and permutation graphs is to use the same kind of intersecting geometric objects - straight line segments, but reduce the restrictions on their possible positions. If we drop all restrictions, then we obtain the large class of SEG graphs - the intersection graphs of straight line segments in the plane \cite{Survey}. However, many of the standard optimization problems, including independent set \cite{IndClqIntKratochvil}, remain NP-hard on SEG graphs. Sub-classes of SEG graphs, with restrictions on the number of directions that line segments could have, have been studied, including grid intersection (or 2-DIR) graphs \cite{GridHartman}. Such models, however, do not generalize that of permutation graphs since segments there can have any direction in the plane, except being parallel with the two lines. 

Here we introduce a new model in which the intersecting objects remain straight line segments, but they can either lie along one of two horizontal lines or go from one horizontal line to the other. This leads to two natural generalizations of both interval and permutation graphs, based on whether all horizontal segments lie on the same line. We formally define the models and graph classes in Section \ref{sec:prels}. In Section \ref{sec:rep_and_count}, we show that these classes have implicit representations. Unlike simple triangle and trapezoid graphs, the two new classes are not contained in the class of perfect graphs. However, in Section \ref{sec:cycles}, we show that we are somewhat limited in how we can represent chordless cycles using the new model, which helps us identify some forbidden subgraphs for the graph classes. In Section \ref{sec:algorithms}, we present polynomial algorithms for the clique and independent set problems, when the model is known. In Section \ref{sec:conclusion}, we discuss some of the open questions on the new graph classes.

\section{Preliminaries}
\label{sec:prels}

Let $G = (V,E)$ be a graph with a vertex set $V = \{ v_1, v_2, \dots, v_n \}$. An intersection model of $G$ is a family of sets $S_i,\ i = 1,2,\dots,n,$ such that $(v_i, v_j) \in E$ if and only if $S_i \cap S_j \neq \emptyset$. We say that $G$ is an intersection graph of the family of sets $S_i$.

Many classes of graphs are defined or can be characterized as the intersection graphs of different types of families of sets. For example, line graphs are the intersection graphs of edges of graphs and circular arc graphs are the intersection graphs of arcs on a circle. In this work, we are primarily interested in interval and permutation graphs and their generalizations. Figure \ref{fig:int_perm_geom_ex} shows the respective geometric intersection models of a graph that is both interval and permutation.

\begin{definition}
    Interval graphs are the intersection graphs of intervals on the real line.
\end{definition}

\begin{definition}
    Permutation graphs are the intersection graphs of line segments whose endpoints lie on two parallel lines $L_1$ and $L_2$, so that for each segment $s$, its endpoints $s_1$ and $s_2$ lie on $L_1$ and $L_2$, respectively.
\end{definition}

\begin{figure}[h]
\centering
\includegraphics[scale=.95]{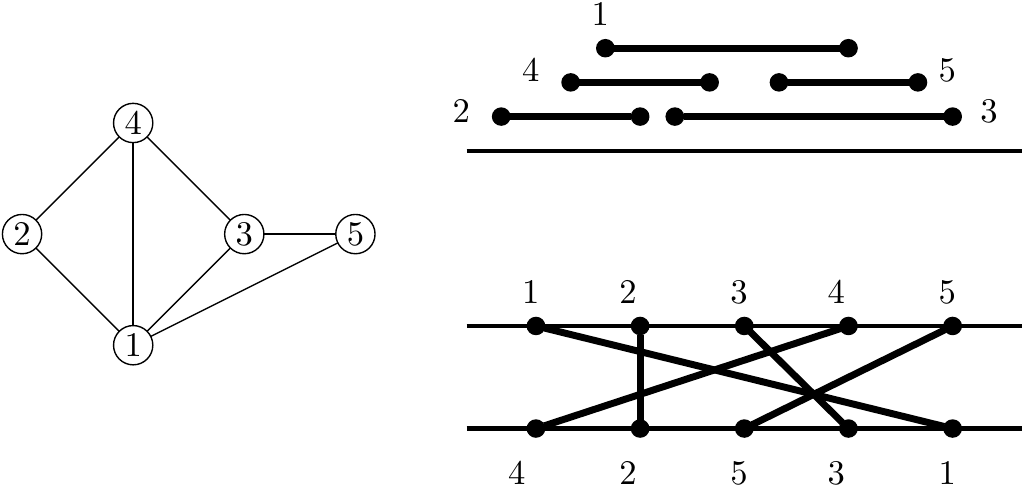}
\caption{A graph that is both permutation and interval, along with the respective geometric representations.}
\label{fig:int_perm_geom_ex}
\end{figure}

Note that we can obtain an equivalent definition of interval graphs by dropping the requirement that the line in question is \textit{real} and by substituting \textit{intervals} with \textit{line segments}. This allows for multiple natural ways of simultaneously generalizing the geometric intersection models of interval and permutation graphs. These include \textit{point-interval} (or \textit{simple triangle}) and \textit{interval-interval} (or \textit{trapezoid}) graphs introduced by Corneil and Kamula \cite{PointInterval}, in which the geometric objects corresponding to vertices are formed by combining multiple line segments. In this work, we study a different generalization of permutation and interval graphs in which the geometric objects corresponding to vertices remain individual line segments.

\begin{definition}
    Let $L_1$ and $L_2$ and be two parallel lines. We say that a line segment $s$ is an \textit{interval segment} if both its endpoints $s_1$ and $s_2$ lie on the same line $L_i$. We say that $s$ is a \textit{permutation segment} if $s_1$ lies on $L_1$ and $s_2$ lies on $L_2$. Let $I$ be a set of interval segments and $P$ a set of permutation segments, and let $G$ be the intersection graph of the set of segments $I \cup P$. We call $I \cup P$ an \textit{interval-permutation-segment} (\textit{IP-SEG}) model of $G$. If all interval segments in $I$ lie on the same parallel line $L_i$, we call $I \cup P$ an \textit{IP-SEG*} model of $G$.
\end{definition}

\begin{definition}
    A graph $G$ is an \textit{IP-SEG} \textit{graph} if it has an IP-SEG model. $G$ is an \textit{IP-SEG*} \textit{graph} if it has an IP-SEG* model.
\end{definition}

Note that under the above definitions, an IP-SEG* model is also an IP-SEG model, which indicates that IP-SEG* graphs are contained in IP-SEG graphs. We will later show that this containment is proper. 

We know that other generalizations of permutation and interval graphs like simple triangle or trapezoid graphs remain perfect. One important characteristic that distinguishes IP-SEG and IP-SEG* graphs is that they may contain odd-holes, meaning they are not contained in the class of perfect graphs. IP-SEG* and IP-SEG models of a cycle on five vertices are shown in Figure \ref{fig:ip_seg_c5}. It is easy to see how these models can be extended to models of larger cycles.

\begin{figure}[ht]
\centering
\includegraphics[scale=.55]{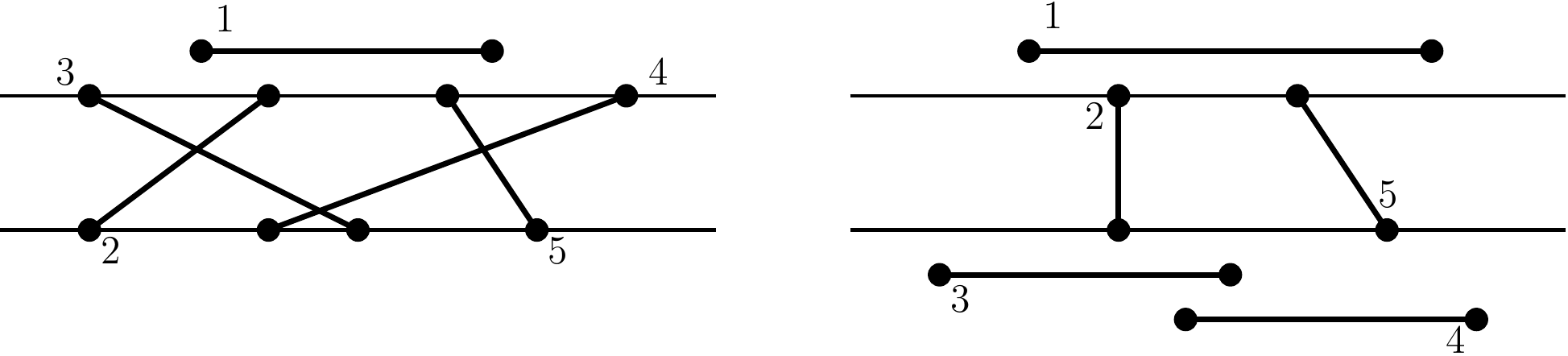}
\caption{An IP-SEG* model of a $C_5$ is shown on the left. An alternative IP-SEG model is given on the right. Note that the numbers shown next to line segments indicate the corresponding vertex labels from a natural labelling of $C_5$.}
\label{fig:ip_seg_c5}
\end{figure}

\section{Representation}
\label{sec:rep_and_count}

Each line segment $s$ in the plane is uniquely determined by its endpoints $s_1$ and $s_2$, each of which can be specified by their $x$ and $y$ coordinates. We will use the notation $s = (x(s_1), y(s_1), x(s_2), y(s_2))$. Thus, when we say we are given the model or representation of an IP-SEG or an IP-SEG* graph $G$, we mean that for each vertex $v$ of $G$ we are given a four-tuple of numbers $(x(v_1), y(v_1), x(v_2), y(v_2))$. In this form, this is not a representation that characterizes IP-SEG or IP-SEG* graphs. In fact it is the characterizing representation of the class of $SEG$ graphs. However, we know that for IP-SEG and IP-SEG* graphs, the $y$ coordinate identifies the horizontal line that an endpoint belongs to, i.e. $y(v_1), y(v_2) \in \{ 1,  2 \}$. $y(v_1) = y(v_2) = y(v)$ indicates that we are dealing with an interval segment and $y(v_1) \neq y(v_2)$ means we have a permutation segment. In addition and without loss of generality, we may assume that for each permutation segment $p$ the first endpoint is positioned on $L_1$ i.e. $p = (x(p_1), 1, x(p_2), 2)$. Further, we may assume that for each interval segment, its left endpoint comes first, i.e. for $i = (x(i_1), y(i), x(i_2), y(i))$, we have $x(i_1) \leq x(i_2)$. We also show that we can limit the range of values that $x(v_1)$ and $x(v_2)$ can take.

Recall that in the model of interval graphs, whether two intervals share a non-empty intersection depends entirely on the relative ordering of their endpoints along the horizontal line. Similarly, whether two permutation segments of a permutation graph representation intersect depends entirely on the relative orderings of their endpoints along $L_1$ and $L_2$. A similar observation can be made for IP-SEG graphs. In particular, for two segments $s$ and $q$, the following holds: 

\begin{enumerate}[i)]
    \item If $s = (x(s_1), y(s), x(s_2), y(s))$ and $p = (x(p_1), y(p), x(p_2), y(p))$ are two interval segments,  they intersect if and only if 
    \begin{itemize}
        \item $y(s) = y(p)$ ($s$ and $p$ lie on the same horizontal line), \textbf{and}
        \item $x(p_1) \leq x(s_1) \leq x(p_2)$ or $x(s_1) \leq x(p_2) \leq x(s_2)$ ($s$ and $p$ overlap).
    \end{itemize}
    \item If $s = (x(s_1), 1, x(s_2), 2)$ and $p = (x(p_1), 1, x(p_2), 2)$ are two permutation segments, they intersect if and only if 
    \begin{itemize}
        \item $x(s_1) \leq x(p_1)$ and $x(p_2) \leq x(s_2)$, \textbf{or}
        \item $x(p_1) \leq x(s_1)$ and $x(s_2) \leq x(p_2)$.
    \end{itemize}
    \item If $s = (x(s_1), y(s), x(s_2), y(s))$ is an interval segment and $p = (x(p_1), 1, x(p_2), 2)$ is a permutation segment, they intersect if and only if $s$ contains one of the endpoints of $p$, that is
    \begin{itemize}
        \item $y(s) = 1$ and $x(s_1) \leq x(p_1) \leq x(s_2)$, \textbf{or}
        \item $y(s) = 2$ and $x(s_1) \leq x(p_2) \leq x(s_2)$.
    \end{itemize}
\end{enumerate}

Thus, all we need to know to determine adjacency between two segments $s$ and $p$ from an IP-SEG or an IP-SEG* model is to know what types of segments $p$ and $s$ are, on which line $L_i$ each of their endpoints is located on, and what are the relative orderings of their endpoints on $L_1$ and/or $L_2$. The last part implies that the number of different values the $x$ endpoint coordinates could take is bounded by the maximum number of endpoints positioned on a single line and, by extension, the total number of endpoints. In other words, we may assume that in an IP-SEG or an IP-SEG* model of a graph $G$, we always have $x(v_1), x(v_2) \in \{ 1,2, \dots, 2n \} $. 

With the above limitations on the range of $x$ and $y$ values, it follows that we need only $O(logn)$ bits of information per vertex to properly represent an IP-SEG or an IP-SEG* graph, meaning these two classes, similar to interval and permutation graphs, have an implicit representation \cite{ImplicitKannan, Efficient}. Therefore, the number of IP-SEG and IP-SEG* graphs on $n$ vertices is bounded by $2^{O(nlogn)}$ 

Note that in the above discussion we have allowed for the possibility of segments sharing one or both endpoints. We know that permutation graphs are usually defined so that no two permutation segments share an endpoint. While this is not imposed as a requirement in the definition of interval graphs, the consecutive cliques arrangement \cite{PQBooth} of an interval graph implies that every interval graph has a representation in which no two intervals share an endpoint. This is also the case for IP-SEG and IP-SEG* graphs.

Indeed, suppose that we have an IP-SEG model of a graph $G$ in which line segments share endpoints and let $e$ be one of those endpoints. Without loss of generality, we may assume $e$ lies on $L_1$. Let $P$ be the set of permutation segments having $e$ as an endpoint. Similarly, let $I_L$ and $I_R$ be the sets of interval segments having $e$ as a left and right endpoint, respectively. We can modify the IP-SEG model of $G$ so that $e$ is no longer a shared endpoint, in the following way. Use the open interval $i_e$ around $e$ that does not contain any other endpoints. In it, arrange the $L_1$ endpoints of permutation segments in $P$ in a reverse order from the one their corresponding $L_2$ endpoints have. In case some of the permutation segments also share an endpoint on $L_2$, resolve the tie arbitrarily. Then, we can extend the interval segments in $I_L$ to the left, so that they all include new endpoints of permutation segments in $P$, end at different endpoints, but their endpoints still remain within $i_e$. We can achieve the same thing with interval segments in $I_R$ by extending them to the right. With this transformation, we ensure that permutations that shared $e$ as an endpoint now either intersect between $L_1$ and $L_2$, or still have a shared endpoint on $L_2$. In addition, all interval segments in $I_L$ and $I_R$ do contain the new endpoints of permutation segments $P$ and all interval segments in $I_L$ and $I_R$ still contain the point $e$, meaning they have non-empty pairwise intersections. Finally, no new pairwise intersections are created as segments in $P \cup I_L \cup I_R$ are the only ones that have an endpoint in the open interval $i_e$. 

Thus, the above transformation leads to a new IP-SEG model of $G$ that has one fewer shared endpoint. Doing this repeatedly, we can obtain an IP-SEG model in which no two segments share an endpoint. The following lemma summarizes the results in this section. 

\begin{lemma} Every IP-SEG graph has an IP-SEG model such that for every vertex $v$, $x(v_1), x(v_2) \in \{ 1,2, \dots , 2n \}$ and no two segments of the model share an endpoint.
\end{lemma}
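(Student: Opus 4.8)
The plan is to establish the two assertions of the lemma in turn, treating the no-shared-endpoints condition first and the coordinate bound second, since the order-preserving relabeling used for the bound automatically preserves the distinctness of endpoints once they have been separated. I would start from an arbitrary IP-SEG model of $G$ and rely throughout on the observation recorded in conditions (i)--(iii): adjacency between any two segments is determined solely by the types of the segments, the lines on which their endpoints lie, and the relative order of the endpoints along $L_1$ and along $L_2$. No comparison ever involves endpoints on different lines, so the intersection graph is an invariant of the pair of linear orders induced on $L_1$ and on $L_2$.

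For the separation of shared endpoints, I would argue by induction on the number of shared endpoints, showing that a single shared endpoint can be removed without changing the intersection graph. Fix such an endpoint $e$, say on $L_1$, and a small open interval $i_e$ around $e$ containing no other endpoint of the model. Partition the segments incident to $e$ into the permutation segments $P$ meeting $L_1$ at $e$, the interval segments $I_L$ having $e$ as left endpoint, and the interval segments $I_R$ having $e$ as right endpoint. The key step, and the one I expect to be the main obstacle, is reassigning the $L_1$ endpoints of the segments in $P$: since each pair in $P$ previously intersected at $e$, I must keep every pair intersecting after perturbation. By condition (ii), two permutation segments intersect exactly when their endpoints occur in opposite orders on $L_1$ and $L_2$, so I would place the new $L_1$ endpoints inside $i_e$ in the reverse of the order of their $L_2$ endpoints (breaking any remaining $L_2$ ties arbitrarily, as such pairs still meet at a shared $L_2$ endpoint to be resolved in a later step). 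This guarantees that every pair in $P$ still crosses.

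It then remains to repair the interval segments. I would extend each segment of $I_L$ to the left and each segment of $I_R$ to the right, keeping all new endpoints inside $i_e$ but making them pairwise distinct, and arranging that every such interval still covers all the relocated $L_1$ endpoints of $P$. Because all of $I_L \cup I_R$ originally contained $e$ and hence a neighborhood inside $i_e$, their pairwise intersections persist by condition (i), and by condition (iii) each retains its intersections with the segments of $P$ it met before. Finally, since only the segments of $P \cup I_L \cup I_R$ have endpoints in $i_e$ and every modification stays within $i_e$, no adjacency involving an outside segment is disturbed and no new adjacency is created. Applying this step repeatedly yields a model with no shared endpoints.

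For the coordinate bound, I would then relabel. After separation, the $x$-coordinates on each of $L_1$ and $L_2$ are pairwise distinct, and since there are at most $2n$ endpoints in total, each line carries at most $2n$ of them. Replacing the coordinates on each line by their ranks under an order-preserving bijection onto an initial segment of $\{1,2,\dots,2n\}$ leaves every per-line order unchanged, hence preserves the intersection graph by (i)--(iii), while keeping all coordinates distinct and within $\{1,2,\dots,2n\}$. This produces the desired model and completes the argument.
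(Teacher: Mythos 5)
Your proposal is correct and follows essentially the same route as the paper: the identical local perturbation at a shared endpoint $e$ (placing the new $L_1$ endpoints of $P$ inside $i_e$ in reverse of their $L_2$ order, then extending $I_L$ leftward and $I_R$ rightward within $i_e$), combined with the observation that adjacency depends only on the per-line orderings, which justifies relabeling coordinates by their ranks in $\{1,2,\dots,2n\}$. The only difference is that you separate endpoints before normalizing coordinates, whereas the paper does the reverse; this ordering is immaterial.
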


\section{IP-SEG* and IP-SEG models of chordless cycles}
\label{sec:cycles}

In an IP-SEG model, two interval segments that do not lie on the same parallel line cannot intersect. Let $G$ be a graph with an IP-SEG  model consisting of interval segments only. If $G$ is connected, then all of the interval segments in its IP-SEG model must lie on the same line. By extension, if $G$ has more than one connected component, the interval segments of a single component $C$ must lie on the same line. It is easy to see that the interval segments of a component $C$ on $L_2$ can be translated to $L_1$, while ensuring that they do not intersect with other interval segments already on $L_1$. This means that a graph $G$ has an IP-SEG model consisting of interval segments only, if and only if it has an IP-SEG* model of only interval segments. In other words, $G$ has an IP-SEG model consisting of only interval segments if and only if $G$ is an interval graph. We also have the more straightforward analogous observation for permutation segments: a graph $G$ has an IP-SEG model consisting of only permutation segments if and only if $G$ is a permutation graph.

Interval graphs are chordal and therefore an IP-SEG or an IP-SEG* model of a chordless cycle of length greater than three cannot consist exclusively of interval segments. Similarly, permutation graphs cannot contain an induced cycle on more than four vertices. Therefore, an IP-SEG or an IP-SEG* model of a chordless cycle of length greater than four cannot consist exclusively of permutation segments. A cycle on four vertices, however, may be represented using only permutation segments, as shown in Figure \ref{fig:p_seg_only_c4}.

\begin{figure}[ht]
\centering
\includegraphics[scale=.75]{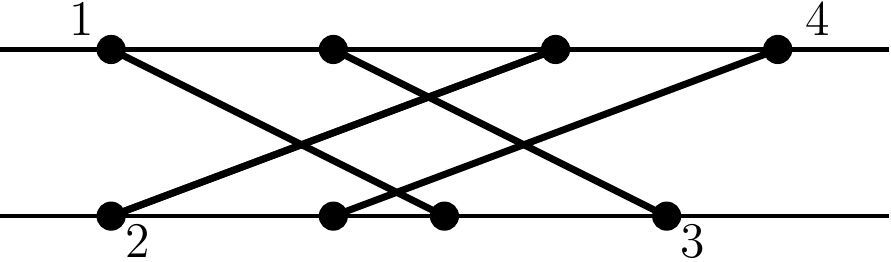}
\caption{An IP-SEG* model of a $C_4$ consisting of permutation segments only}
\label{fig:p_seg_only_c4}
\end{figure}

We have already seen that a $C_5$ is an IP-SEG and an IP-SEG* graph. It is easy to see how the models in Figure \ref{fig:ip_seg_c5} can be extended by adding more interval segments to represent larger chordless cycles. Our goal is to the show that the ways in which such a cycle can be represented is in a certain sense limited and we will use that to identify examples of graphs that do not belong to these two classes.

Let $u$ and $v$ be vertices of an IP-SEG graph $G$ corresponding to interval segments $i_u$ and $i_v$, respectively, in a given IP-SEG representation of $G$. Further, suppose that $i_v$ is fully contained in $i_u$. Then, then $u$ and $v$ are neighbors and all other neighbors of $v$ must also be neighbors of $u$ in $G$. In other words, $N[v] \subseteq N[u]$ must hold for the closed neighborhoods of $v$ and $u$. 

Consider a path $P_n = (v_1,v_2,\dots,v_n)$. Clearly, $N[v_1] \subseteq N[v_2]$, $N[v_{n}] \subseteq N[v_{n-1}]$, and the $v_1$ and $v_n$ are the only vertices in $P_n$ with the property of having a closed neighborhood that is contained within a closed neighborhood of another vertex. Now consider an IP-SEG representation of $P_n$ using only interval segments. From the above discussion it follows that an interval segment could be contained within another only if it corresponds to $v_1$ or $v_n$. We say that an IP-SEG representation of $P_n$ using only interval segments is an \textit{active interval representation} if no interval segment is contained within another. Assuming that the interval segment corresponding to $v_1$ is the one with the leftmost left endpoint, it is easy to see that in an active interval representation of $P_n$, the left-to-right ordering of all left endpoints would coincide with the ordering of the vertices in $P_n$. The same is true for the left-to-right ordering of all right endpoints of interval segments. An example is shown in Figure \ref{fig:act_int_path}. 

\begin{figure}[ht]
\centering
\includegraphics[scale=.75]{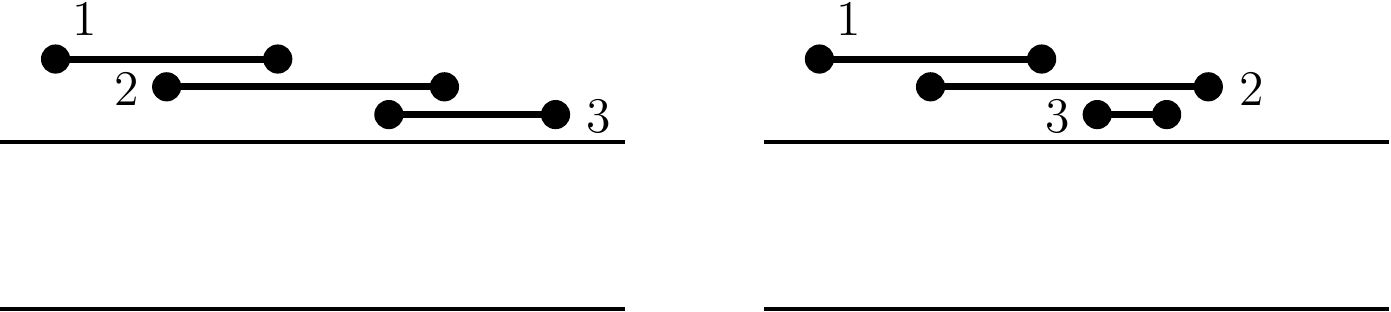}
\caption{An active (left) and an inactive (right) interval representation of a $P_3$.}
\label{fig:act_int_path}
\end{figure}

Suppose we are given an IP-SEG representation IP($C_n$) of a chordless cycle $C_n = (v_1, v_2, \dots, v_n)$, with $n \geq 4$. Let $s(v_i)$ denote the line segment in IP($C_n$) corresponding to vertex $v_i$. Further, suppose that not all $s(v_i)$'s are permutation segments, i.e. that IP($C_n$) consists of both interval and permutation segments. Let $S_{i,j} = (s(v_i), s(v_{i+1}), \dots, s(v_j))$ be a maximal sequence of consecutive interval segments in IP($C_n$), i.e. such that $s(v_{i-1})$ and $s(v_{j+1})$ are permutation segments and for each $k \in \{ i, i+1, \dots, j \}$, $s(v_k)$ is an interval segment (indices are modulo $n$). We call $S_{i,j}$ an \textit{interval arc} of IP($C_n$). Analogously, we can define the notion of a \textit{permutation arc} of IP($C_n$). Using the above notation, any IP-SEG representation IP($C_n$) of a chordless cycle $C_n$ can be described as consisting of interval and permutation arcs. Clearly, the number of interval arcs must equal that of permutation arcs. We also have the following result.

\begin{lemma} 
\label{lem:one_segment_interval_arcs}
Let $C_n$ be a chordless cycle, with $n \geq 4$. $C_n$ has an IP-SEG representation with $t$ interval arcs of lengths $\{ l_1, l_2, \dots, l_t \}$ if and only if $C_{n-n'}$, where $n' = \sum_{i=1}^{t} (l_i - 1)$, has an IP-SEG representation with $t$ interval arcs each of length $1$.  
\end{lemma}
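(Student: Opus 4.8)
The plan is to prove both implications constructively, by exhibiting an explicit \emph{contraction} of interval arcs (for the forward direction) and an inverse \emph{expansion} (for the reverse direction), and to argue that neither operation creates nor destroys any intersection it should not. Throughout, fix one interval arc $S_{i,j} = (s(v_i), \dots, s(v_j))$ lying on a line $L_k$ and flanked by the permutation segments $s(v_{i-1})$ and $s(v_{j+1})$. Since the arc is an induced path of the chordless cycle, consecutive segments overlap while non-consecutive ones do not, so the union $F = s(v_i) \cup \dots \cup s(v_j)$ is a single interval $[L,R]$ on $L_k$. By the previous lemma I may assume no two segments of the model share an endpoint. The key tool is the third intersection condition: a permutation segment meets an interval segment on $L_k$ exactly when that interval contains the permutation segment's $L_k$-endpoint.

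For the forward direction I would replace the whole arc $S_{i,j}$ by the single interval segment $I = [L,R]$ on $L_k$, doing this simultaneously for every interval arc and leaving all permutation segments untouched. The heart of the argument is to show that $I$ has exactly the intersections the contracted vertex must have. First, $I$ meets both flankers, since the $L_k$-endpoints of $s(v_{i-1})$ and $s(v_{j+1})$ lie in $s(v_i) \subseteq F$ and $s(v_j) \subseteq F$. Second, $I$ meets no other permutation segment: if some $s(v_m)$ with $v_m \notin \{v_{i-1}, \dots, v_{j+1}\}$ had its $L_k$-endpoint in $F = I$, that point would lie in some arc segment $s(v_\ell)$, forcing the chord $v_m v_\ell$ and contradicting chordlessness. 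Third, $I$ is disjoint from every other contracted interval, because segments of distinct interval arcs correspond to non-adjacent vertices of $C_n$ and hence do not overlap, so a foreign arc's footprint cannot meet $F$. Thus after contracting all $t$ arcs the permutation structure is preserved and each arc becomes a single segment adjacent only to its two former flankers, which is precisely a representation of the cycle obtained from $C_n$ by collapsing each induced path $S_{i,j}$ to one vertex, namely $C_{n-n'}$ with $n' = \sum_{i=1}^t (l_i - 1)$.

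For the reverse direction I would run the construction backwards. Given a representation of $C_{n-n'}$ with all interval arcs of length $1$, take one such arc, the interval segment $I = [a,b]$ on $L_k$; by the localization argument above its interior contains exactly the two $L_k$-endpoints $p_L, p_R$ of its flankers and no other permutation endpoint, and no foreign interval segment overlaps it. I then subdivide $I$ into an \emph{active} chain of $l_i$ overlapping intervals inside $[a,b]$, ordered left to right so that the two extreme intervals contain $p_L$ and $p_R$ respectively (reindexing the flankers so that $p_L \le p_R$) while the intermediate intervals contain neither. Because $p_L \neq p_R$ this is always possible, and because the chain stays within $[a,b] = I$, the only intersections it creates are the consecutive overlaps forming the path together with the two end intersections with the flankers. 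Performing this for every arc turns each single interval into an induced path $P_{l_i}$ with the correct endpoints, yielding a representation of $C_n$ with interval arcs of the prescribed lengths $l_1, \dots, l_t$.

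The main obstacle, and the step requiring genuine care, is the forward direction's localization claim: that the footprint $F$ of an interval arc contains no permutation endpoint beyond its two flankers and overlaps no foreign interval segment; everything else is bookkeeping. This is exactly where chordlessness is indispensable, since any stray endpoint inside $F$ or any overlap with an outside interval would manifest as a chord of $C_n$. Once this localization is established, both the contraction and its inverse act entirely within the single interval $F$, so correctness reduces to the elementary fact that a path $P_l$ has an active interval representation occupying a prescribed interval with two prescribed contact points at its ends. Note that since each arc remains on its original line throughout, the construction preserves the IP-SEG* property as well, should one wish to state the analogous claim for that class.
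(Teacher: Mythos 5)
Your proof is correct and follows essentially the same route as the paper: the paper's proof works by repeatedly splitting one interval segment of an arc into two overlapping segments whose union is the original (lengthening the arc by one) or merging two consecutive arc segments into their union (shortening it by one), which is exactly your contraction/expansion carried out one step at a time. Your version simply batches these moves into a single operation per arc and spells out the localization claim (no foreign permutation endpoint or interval segment meets an arc's footprint) that the paper leaves implicit.
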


\begin{proof}
Let $S_{i,i+k}$ be an interval arc of a given IP-SEG representation of a chordless cycle $C_n$. Note that $S_{i,i+k}$, being formed by $k+1$ segments, must be an active interval representation of the path $P_{k+1}$. Suppose we take one interval segment $s_j$ in $S_{i,i+k}$ and replace it with two partially overlapping interval segments $s_{j'}$ and $s_{j''}$ such that $s_j = s_{j'} \cup s_{j''}$ and $s_{j'} \cap s_{j''}$ does not overlap any of the other segments in $S_{i,i+k}$. With this we are transforming an IP-SEG representation of $C_{n}$ into an IP-SEG representation of $C_{n+1}$. Similarly, assuming $k \geq 1$, we can take two consecutive interval segments $s_j$ and $s_{j+1}$ in $S_{i,i+k}$ and replace them with one new segment $s_{j*}$ such that $s_{j*} = s_j \cup s_{j+1}$. This allows us to transform an IP-SEG representation of $C_{n}$ with an interval arc of length $k+1$ into an IP-SEG representation of $C_{n-1}$ with an interval arc of length $k$. 
\end{proof}

We are interested in determining how many interval arcs an IP-SEG representation of a chordless cycle can have. Recall that in Figure \ref{fig:ip_seg_c5} we saw one IP-SEG* representation of a $C_5$ consisting of one interval arc and one permutation arc as well as an IP-SEG representation consisting of two interval arcs and two permutation arcs. We will show that these are in fact the only two possible numbers of interval arcs we could have.

\begin{lemma}
\label{lem:IP_SEG*_one_interval_arc}
An IP-SEG* model of a chordless cycle must consist of exactly one interval arc.
\end{lemma}

\begin{figure}[ht]
\centering
\includegraphics[scale=.75]{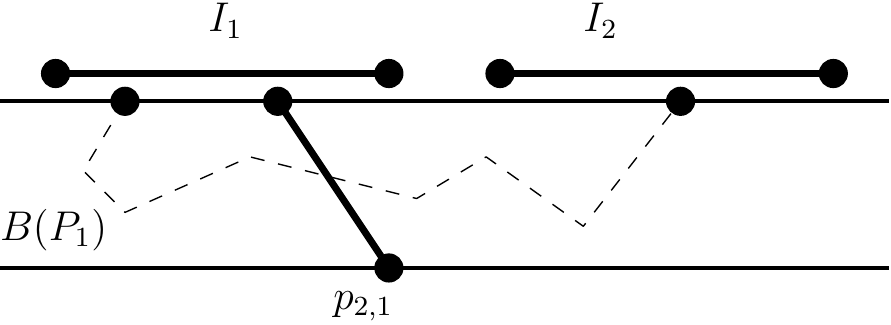}
\caption{An IP-SEG* representation of a chordless cycle cannot consist of two interval arcs because the broken line $B(P_1)$ induced by the first permutation arc $P_1$ will have to intersect a segment of the other permutation arc $P_2$.}
\label{fig:one_interval_arc}
\end{figure}

\begin{proof}
Suppose there exists a chordless cycle $C_n$ with an IP-SEG* model that has two interval arcs.
Then, by Lemma \ref{lem:one_segment_interval_arcs}, there exists a chordless cycle $C_m$, $m \leq n$, with an IP-SEG* model that has two one-segment interval arcs $I_1$ and $I_2$. Without loss of generality, we may assume that in this model $I_1$ and $I_2$ both lie on $L_1$ and $I_1$ is to the left of $I_2$. Let $P_1 = (p_{1,1}, p_{1,2}, \dots , p_{1,q})$ and $P_2 = (p_{2,1}, p_{2,2}, \dots, p_{2,r})$ be the two permutation arcs so that $p_{1,1}$ and $p_{2,1}$ are the permutation segments that intersect $I_1$ and the intersection point $p_{2,1} \cap I_1$ is to the right of $p_{1,1} \cap I_1$. Consider the points $z_0 = p_{1,1} \cap I_1$, $z_{q+1} = p_{q,1} \cap I_2$, and $z_k = p_{1,k} \cap p_{1,k+1}$, $1 \leq k \leq q$. The straight-line segments connecting $z_i$ with $z_{i+1}$, $0 \leq i \leq q$, form a broken line $B(P_1)$ connecting $z_0$ with $z_{q+1}$. $B(P_1)$ is a continuous curve connecting $z_0$ and $z_{q+1}$ that is bounded between $L_1$ and $L_2$. As such, one of its constituting straight-line segments must intersect $p_{2,1}$ (see Figure \ref{fig:one_interval_arc}). Since each of these segments must be contained within some permutation segment of $P_1$, that would imply that there are permutation segments $p_{1,x}$ and $p_{2,y}$ that intersect. However, these two permutation segments correspond to two non-consecutive vertices of a chordless cycle that cannot be adjacent, which is a contradiction. Therefore, a $C_n$ cannot have an IP-SEG* model with two interval arcs. Furthermore, if a $C_n$ could have an IP-SEG* model with $k \geq 3$ interval arcs, we could turn this into an IP-SEG* model with $k-1$ interval arcs by collapsing one of the interval arcs into a single point and effectively merging two permutation arcs into one. Therefore, a $C_n$ cannot have an IP-SEG* model with more than one interval arc.       
\end{proof}

\begin{lemma}
\label{lem:IP_SEG_one_or_two_interval_arcs}
An IP-SEG model of a chordless cycle must consist either of exactly one interval arc or of exactly two interval arcs positioned on different horizontal lines.
\end{lemma}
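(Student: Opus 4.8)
The plan is to reduce to the already-proven IP-SEG* case of Lemma~\ref{lem:IP_SEG*_one_interval_arc}. First I would recast the goal as a single statement about lines. Under the standing assumption of this section that the model contains both segment types (so there are $t\geq 1$ interval arcs), the stated conclusion fails \emph{precisely} when two interval arcs lie on the same horizontal line: three or more interval arcs force two onto a common line by pigeonhole, since there are only the two lines $L_1,L_2$, while exactly two on the same line is the remaining bad case. Conversely, two arcs on a common line means $t\geq 2$ and the conclusion fails. Hence it suffices to prove the single claim: \emph{no two interval arcs of an IP-SEG model of a chordless cycle lie on the same line.} Granting it, there are at most two interval arcs and, if there are two, they are on different lines, which together with $t\geq 1$ is exactly the lemma.

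To prove the claim I would argue by contradiction, assuming two interval arcs lie on the same line, say $L_1$. The key device is the ``collapsing'' operation already used at the end of the proof of Lemma~\ref{lem:IP_SEG*_one_interval_arc}: an entire interval arc can be contracted to a single point, which identifies the boundary endpoints of its two flanking permutation arcs and thereby merges those two permutation arcs into one, converting an IP-SEG representation of $C_n$ into one of $C_{n-1}$ with one fewer interval arc. I would apply this operation to every interval arc lying on $L_2$, one at a time. Because collapsing an $L_2$-arc only merges permutation arcs and never touches the arcs on $L_1$, the two (or more) interval arcs on $L_1$ survive as distinct interval arcs, still separated on each side by at least one (now merged) permutation arc. (Since the operation handles a full arc, the single-segment reduction of Lemma~\ref{lem:one_segment_interval_arcs} is not even needed here, though it is available.)

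After all $L_2$-arcs are collapsed, every remaining interval segment lies on $L_1$, so the result is an \emph{IP-SEG*} model; moreover it represents a chordless cycle $C_m$ with $m\geq 4$, because at least two interval arcs and hence at least two permutation arcs remain, and it still contains at least two interval arcs. This contradicts Lemma~\ref{lem:IP_SEG*_one_interval_arc}, which asserts that an IP-SEG* model of a chordless cycle has exactly one interval arc. (If no $L_2$-arc was present to begin with, the original model is already IP-SEG* with $\geq 2$ interval arcs, and the contradiction is immediate.) The claim, and with it the lemma, follows.

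The step I expect to require the most care is verifying that collapsing an $L_2$-interval arc is legitimate in the IP-SEG setting, namely that the contraction yields a genuine IP-SEG representation of a chordless cycle, creating only the intended new adjacency between the two merged permutation segments and no spurious intersections elsewhere, while preserving the non-adjacency of segments belonging to different arcs. This is exactly the surgery invoked in the proof of Lemma~\ref{lem:IP_SEG*_one_interval_arc}; since it is local to the arc being collapsed and symmetric between $L_1$ and $L_2$, the same justification transfers, but I would state the preservation of chordlessness explicitly. Should this reduction prove delicate, a fallback is to generalize the broken-line argument of Lemma~\ref{lem:IP_SEG*_one_interval_arc} directly: choose two interval arcs on $L_1$ with no $L_1$-arc between them on one side, form the broken line of the connecting path (which meets $L_1$ only at its two endpoints, even though it may touch $L_2$ along intervening $L_2$-arcs), and show it bounds a region into and out of which the first permutation segment of the other connecting path must pass, forcing a forbidden chord.
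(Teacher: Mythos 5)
Your proof is correct, but it takes a genuinely different route from the paper's. The paper, after using Lemma~\ref{lem:IP_SEG*_one_interval_arc} to dispose of models whose interval arcs all lie on one line, attacks the minimal mixed configuration directly: it assumes two one-segment interval arcs $I_1, I_2$ on $L_1$ and one arc $I_3$ on $L_2$, and derives a contradiction by a fresh broken-line argument (a case analysis on whether $p_{1,1}\cap I_1$ lies left or right of $p_{3,1}\cap I_1$, with the forbidden crossing supplied by $B(P_1)$ in one case and by $B(P_2)$, $B(P_3)$, or $I_3$ in the other); the collapsing observation is used only to push configurations with more than three arcs down to this base case. You instead make collapsing the whole engine: contract every interval arc on $L_2$, note that this surgery never touches the $L_1$ arcs and preserves chordlessness, and land in an IP-SEG* model of a chordless cycle on at least four vertices with at least two interval arcs, contradicting Lemma~\ref{lem:IP_SEG*_one_interval_arc} used as a black box. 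Your reduction is more economical --- no second geometric case analysis or figure is needed, and it makes transparent that the IP-SEG statement is ``really'' the IP-SEG* statement after collapsing --- but it leans on the collapsing surgery even in the base case, an operation the paper only sketches informally as an ``observation''; your verification sketch (no foreign endpoints lie inside the span of the collapsed arc, so only the intended new intersection between the two flanking permutation segments is created) is precisely what must be made rigorous, and your stated fallback is essentially the paper's actual proof. One small slip: contracting a full interval arc of length $l$ yields a representation of $C_{n-l}$, not $C_{n-1}$; this is harmless, since all your argument needs is that the result is a chordless cycle on at least four vertices retaining the $L_1$ arcs.
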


\begin{figure}[ht]
\centering
\includegraphics[scale=.57]{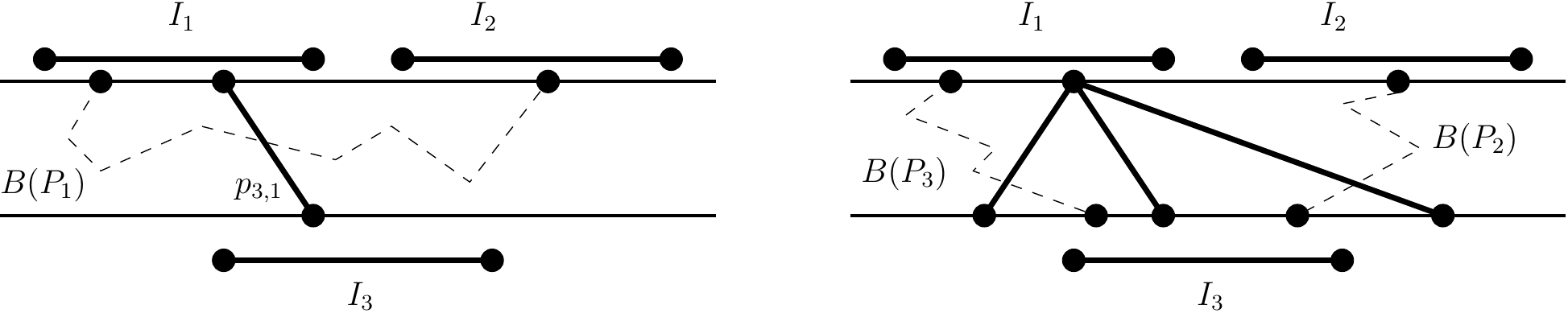}
\caption{An IP-SEG representation of a chordless cycle cannot consist of two interval arcs on $L_1$ and one on $L_2$ because: (1) the broken line $B(P_1)$ induced by the permutation arc $P_1$ will have to intersect a segment of the permutation arc $P_3$ (model on the left) or (2) a segment of the permutation arc $P_1$ would have to intersect either $B(P_2)$, $B(P_3)$, or $I_3$ (model on the right).}
\label{fig:one_interval_arc_per_line}
\end{figure}

\begin{proof}
Given Lemma \ref{lem:IP_SEG*_one_interval_arc}, we only need to consider proper IP-SEG models having at least one interval arc on each horizontal line. Suppose there exists a chordless cycle $C_n$ with an IP-SEG model that has two interval arcs $I_{1}$ and $I_{2}$ on $L_1$ and one interval arc $I_3$ on $L_2$. As in the proof of Lemma \ref{lem:IP_SEG*_one_interval_arc}, we may assume that each of these are one-segment interval arcs.
Let $P_1 = (p_{1,1}, p_{1,2}, \dots , p_{1,q})$ be the arc between $I_1$ and $I_2$, $P_2 = (p_{2,1}, p_{2,2}, \dots , p_{2,r})$ the arc between $I_2$ and $I_3$ and $P_3 = (p_{3,1}, p_{3,2}, \dots , p_{3,s})$ the arc between $I_1$ and $I_3$. Further, assume that $p_{1,1}$ and $p_{3,1}$ are the permutation segments of $P_1$ and $P_3$, respectively, intersecting with $I_1$. If the intersection point $p_{1,1} \cap I_1$ is to the left of $p_{3,1} \cap I_1$, then the broken line $B(P_1)$ induced by the intersection points of permutation segments of $P_1$ (see proof of Lemma \ref{lem:IP_SEG*_one_interval_arc}) must intersect $p_{3,1}$. This would imply an intersection between two permutation segments that correspond to non-consecutive vertices in $C_n$, a contradiction. Suppose the intersection point $p_{3,1} \cap I_1$ is to the left of $p_{1,1} \cap I_1$ and consider the endpoint of $p_{1,1}$ that lies on $L_2$. It cannot lie within $I_3$ as $p_{1,1}$ and $I_3$ do not correspond to consecutive vertices in the chordless cycle. But if it were to lie to the left or right of $I_3$, then $B(P_3)$ or $B(P_2)$ would need to intersect with $p_{1,1}$, both of which lead to a contradiction. Therefore, a $C_n$ cannot have an IP-SEG model with two interval arcs on one horizontal line and one interval arc on the other. Recalling the observation about collapsing arcs, this also implies that an IP-SEG model of a chordless cycle $C_n$ cannot have more than one interval arc per horizontal line.
\end{proof}

We can use Lemma \ref{lem:IP_SEG_one_or_two_interval_arcs} to identify examples of graphs that do not belong to the classes of IP-SEG* and IP-SEG graphs. Consider the graph $G_7$ on 21 vertices formed by a cycle $C_7 = (v_1, v_2, \dots, v_7)$ and vertices $w_i$ and $z_i$, $1 \leq i \leq 7$, such that $w_i$ is pendant to $v_i$ and $z_i$ is pendant to $w_i$. If $G_7$ has an IP-SEG* model, it must consist on one interval arc $I$ and one permutation arc $P$. Let $P = (p_1, \dots, p_k)$, with consecutive segments corresponding to consecutive vertices in $C_7$ and $p_1$ intersecting $I$ to the left of $p_k$. It is easy to see that $2 \leq k$. We also have that $k \leq 6$, since we cannot represent cycles of length more than four with permutation segments only. Let $O_1$ and $O_2$ be the two permutations of $(1, \dots ,k)$ that define the intersections of permutation segments of $P$. $1$ must appear before $k$ in both $O_1$ and $O_2$. Further, since $p_1$ and $p_k$ are the only permutation segments intersecting $I$, $1$ and $k$ must appear consecutively in $O_1$. If $k \geq 5$, then $p_3$ cannot intersect neither $p_1$ nor $p_k$. However, the only way to achieve this is if both $O_1$ and $O_2$ contain the sub-ordering $(1,3,k)$, which contradicts $1$ and $k$ being consecutive in $O_1$. Therefore, $k \leq 4$. 

\begin{figure}[ht]
\centering
\includegraphics[scale=.57]{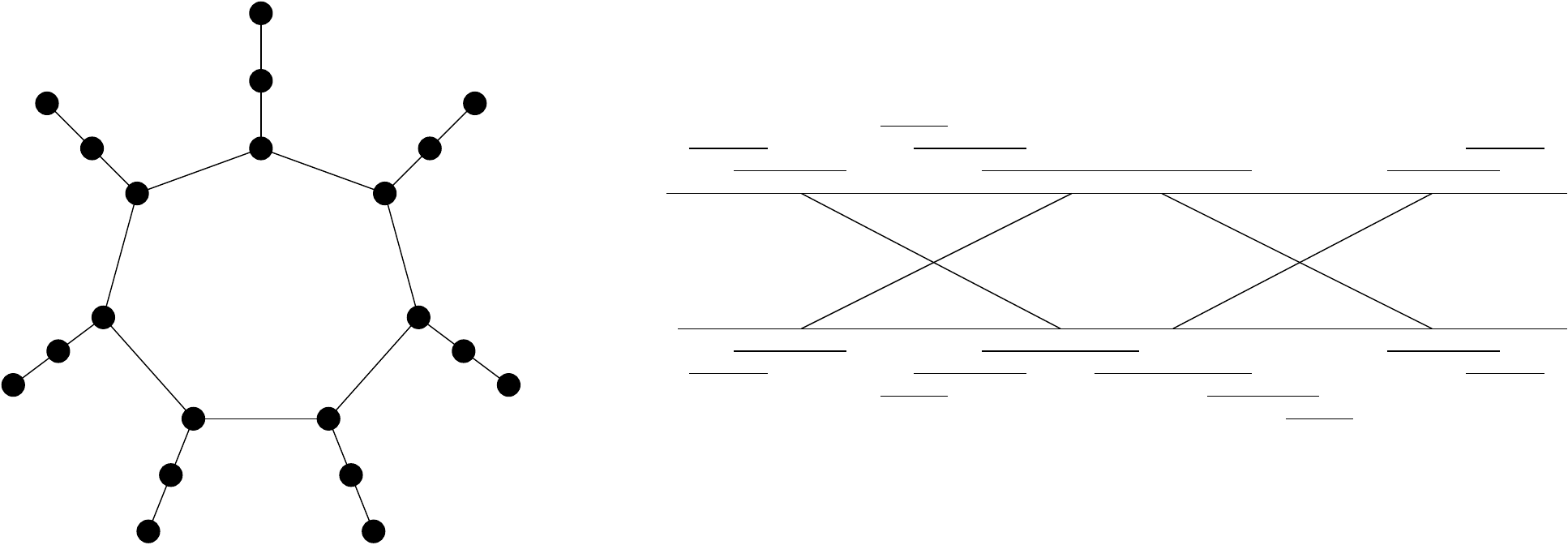}
\caption{The graph $G_7$ (left) is not an IP-SEG* graph, but is an IP-SEG graph. One possible IP-SEG model of $G_7$ is shown on the right.}
\label{fig:g7}
\end{figure}

From the above, it follows that the IP-SEG* representation of $C_7$ must have an interval arc composed of at least three segments. Let $v_t$ be a vertex corresponding to an interior interval segment $s_t$ of $I$. Note that since $s_t$ is between the endpoints of the broken line $B(P)$ induced by the permutation arc, the segment corresponding to $v_t$'s neighbor $w_t$ cannot be a permutation segment. This, combined with the fact that $s_t$'s ends are overlapped by the two neighboring interval segments, $w_t$ must correspond to an interval segment that is fully contained in $s_t$. However, $w_t$ has a neighbor $z_t$ that it does not share with $v_t$, a contradiction. Therefore, $G_7$ is not an IP-SEG* graph. Note that this would be true for any graph $G_n$ constructed in an analogous way from a cycle $C_n$ with $n \geq 7$. $G_7$ is also a minimal non-member of IP-SEG* under vertex removal.  

$G_7$ and $G_8$, can be represented using an IP-SEG model, which demonstrates that the class IP-SEG* is properly contained in IP-SEG. However, this is not true for graphs $G_n$ with $n \geq 9$. One can show that in an IP-SEG model of a chordless cycle that consists of two interval and two permutation arcs, the permutation arcs cannot consist of more than two segments. Thus, at least one interval arc of an IP-SEG representation of a $C_n$ with $n \geq 9$  would need to be of length at least three. This would lead us to the same contradiction when trying to assign segments to $w_t$ and $z_t$ for which $v_t$ corresponds to an interior interval segment.

\section{Clique and independent set}
\label{sec:algorithms}

The respective geometric intersection models of interval and permutation graphs immediately imply very natural polynomial algorithms for solving several important optimization problems, including clique, independent set, and graph coloring. For example, given the model of an interval graph, we can find a maximum independent set by using a greedy algorithm which at each step selects the interval with the leftmost right endpoint, while removing that interval and all other that intersect it from consideration. Given the model of a permutation graph, we can easily find a maximum clique by recovering the defining permutation of the graph from the model and finding the longest decreasing subsequence in it. These algorithms are not of great practical importance, as there exist linear-time algorithms for the respective problems on larger graph classes, that do not require a model as part of the input. Nevertheless, they point us to an initial direction in the study of such problems on the new classes of IP-SEG and IP-SEG* graphs. In particular, we look at the clique and independent set problems on the classes, when the IP-SEG model is given.

\subsection{The Clique problem}

Suppose that $C$ is a clique in an IP-SEG graph $G$ with a given IP-SEG model. Since interval segments lying on different parallel lines cannot intersect, all interval segments in $C$, if any, must lie on a single line. Without loss of generality, we may assume that is $L_1$. Denote by $C_{int}$ the set of interval segments in $C$ and by $C_{prm}$ the set of permutation segments in $C$. 

Consider the case when $C_{int} \neq \emptyset$ and let $s_C$ be the intersection of interval segments in $C_{int}$. As such, $s_C$ is fully contained in each interval segment in $C_{int}$ and it must contain the top endpoint of each permutation segment in $C_{prm}$. In addition, the endpoints of $s_C$ must be endpoints of one or more interval segments in $C_{int}$.

\begin{figure}[ht]
\centering
\includegraphics[scale=.75]{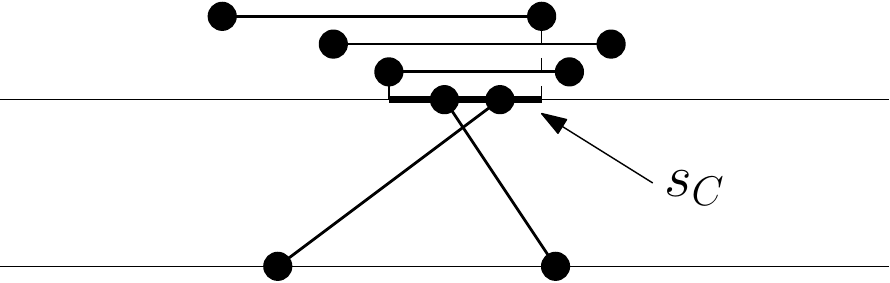}
\caption{An IP-SEG representation of a clique on 5 vertices}
\label{fig:ip_seg_one_clique}
\end{figure}

If we knew what $s_C$ was, it would not be difficult to recover a clique of maximum size. First, set $C_{int}$ to be the set of interval segments fully containing $s_C$. Then, identify all permutation segments that have an endpoint in $s_C$ and find a clique of maximum size $C_{prm}$ on the permutation graph they induce. Note that, since there may be multiple possible options for $C_{prm}$, there may be multiple cliques of maximum size that correspond to $s_C$. Nevertheless, the above procedure will recover a maximum clique, when we know $s_C$.

This leads to a simple polynomial algorithm for the clique problem, when the IP-SEG model is known. First, identify the maximum clique on the graph $G_p$ induced by the permutation segments of $G$. Then, go over all pairs of endpoints $s_1, s_2$ of interval segments along $L_1$ and identify the maximum clique formed by interval segments containing  the interval $s_C = [s_1, s_2]$ and permutation segments with an endpoint in $s_C$. Repeat the same for pairs of endpoints along $L_2$. The clique of largest size found in this procedure is a maximum clique of $G$.

\subsection*{\bf MaxClique (given the model)}
\begin{itemize}
    \item[{\tt 0.}] Find the maximum clique $C_p$ of the graph induced by all permutation segments of $G$
    \item[{\tt 1.}] Set $c_{max} = |C|$ and $C_{max} = C_p $
    \item[{\tt 2.}] For $i$ in $\{ 1,2 \} $
    \item[{\tt 3.}] \quad For each pair of endpoints $s_1, s_2$ on $L_i$ 
    \item[{\tt 4.}] \quad \quad Identify the interval segments on $L_i$ containing the interval $[s_1, s_2]$ and store them in $C_{int}$
    \item[{\tt 5.}] \quad \quad Find a maximum clique formed by permutation segments with an endpoint in $[s_1, s_2]$ and store it in $C_{prm}$  
    \item[{\tt 6.}] \quad \quad Combine $C_{int}$ and $C_{prm}$ into $C$
    \item[{\tt 7.}] \quad \quad \quad If $|C| > c_{max}$ then set $c_{max} = |C|$ and $C_{max} = C$
    \item[{\tt }] Return $C_{max}$
\end{itemize}

Depending on what subroutine we apply to find a maximum clique $C_{prm}$ on the permutation graphs induced by permutation segments, the overall running time of the algorithm would be $O(n^2(n+m))$ or $O(n^3logn)$.

\subsection{The Independent Set problem}

We say that a line segment $p$ (interval or permutation) in an IP-SEG model is \textit{to the right} of another line segment $q$, and write $q < p$, if for each parallel line $L_i$, either at least one of $p$ and $q$ does not have an endpoint on $L_i$ or each endpoint of $p$ on $L_i$ is to the right of each endpoint of $q$ on $L_i$. We say that a line segment $r$ is \textit{between} two segments $p$ and $q$ if $p < r$ and $r < q$. Clearly, two line segments $p$ and $q$ do not intersect if and only if $q < p$ or $p < q$.

Suppose $G$ is an IP-SEG graph with a given model $IP(G)$. As the model is fixed, we will interchangeably use the notions of vertices in $G$ and segments in $IP(G)$, as well as the notions of induced subgraphs of $G$ and sets of line segments in $IP(G)$, depending on the context. Let $I$ be a maximum independent set in $G$ and $I_{prm}$ and $I_{int}$ be the sets of permutation and interval segments, respectively, that form $I$. From the above observation, it follows that the permutation segments in $I_{prm}$ must form a sequence $\{ p_i \}$ such that each $p_{i+1}$ is to the right of $p_i$. 

\begin{figure}[ht]
\centering
\includegraphics[scale=.75]{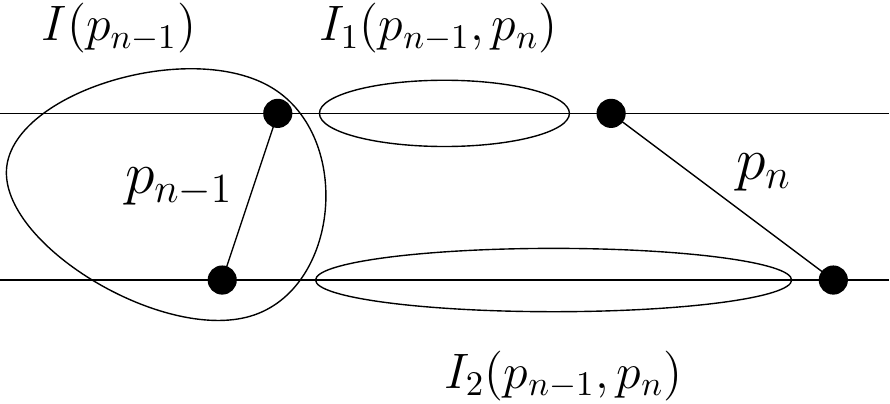}
\caption{An IP-SEG representation of an independent set with at least two permutation segments}
\label{fig:ip_seg_one_ind_set}
\end{figure}

Let $p_{n-1}$ and $p_n$ be the next to last and last permutation segments in the sequence $\{ p_i \}$, respectively. Let $G(p_{n-1})$ be the set of all segments, interval and permutation, that $p_{n-1}$ is to the right of and let $I(p_{n-1}) = I \cap G(p_{n-1})$. It is easy to see that if $I = I(p_n)$ is of maximum size in $G$, then $I(p_{n-1})$ must be an of maximum size in the subgraph $G(p_{n-1})$ of $G$. 

Denote by $G_i(p_{n-1}, p_{n})$ the set of all interval segments along $L_i$ that are between $p_{n-1}$ and $p_{n}$ and let $I_i(p_{n-1}, p_{n}) = I \cap G_i(p_{n-1}, p_{n})$. It is again easy to see that if $I = I(p_n)$ is of maximum size in $G$, then $I_i(p_{n-1}, p_{n})$ must be of maximum size in the subgraph $G_i(p_{n-1}, p_{n})$ of $G$.

While finding $I(p_{n-1})$ amounts to solving the original problem of finding $I = I(p_n)$, given that the subgraph $G_i(p_{n-1}, p_{n})$ is an interval graph, we can find $I_i(p_{n-1}, p_{n})$ by applying an existing algorithm for the independent set on interval graphs.

This leads to a simple dynamic programming algorithm for the independent set problem in which for each permutation segment $p$ we keep track of the largest independent set $I(p)$ formed by $p$ along with segments that $p$ is to the right of. Begin by obtaining a left-to-right topological ordering $T$ of the set of all permutation segments using the "to the right" relation. Then, process segments in $T$ in a left-to-right order. If a segment $p$ is not to the right of any of the already processed segments from $T$, then $I(p)$ is simply the union of $p$ and the largest independent sets on $L_1$ and $L_2$ formed by interval segments that $p$ is to the right of. Otherwise, we also need to consider the sets $I(p') \cup I_2(p',p) \cup I_2(p',p)$ for each permutation segment $p'$ that $p$ is to the right of. We also need to account for the possibility that the largest independent set of $G$ consists only of interval segments in $IP(G)$. For this, we simply need to combine the largest independent set of the interval subgraph of $G$ induced by interval segments lying on $L_1$ with the corresponding independent set on $L_2$.

\subsection*{\bf MaxIS (given the model)}
\begin{itemize}
    \item[{\tt 0.}] Find the largest independent sets $I_1$ and $I_2$ of interval segments on $L_1$ and $L_2$, respectively
    \item[{\tt 1.}] Set $I_{max} = I_1 \cup I_2$ and $i_{max} = |I_{max}|$
    \item[{\tt 2.}] Find a topological ordering $T$ of the permutation segments of $G$
    \item[{\tt 3.}] For $p$ in $T$
    \item[{\tt 4.}] \quad Set $I(p) = I_1(p) \cup I_2(p)$, where $I_i(p)$ is a largest ind. set of interval segments $s$ on $L_i$ such that $s < p$
    \item[{\tt 5.}] \quad For $p'$ in $T$ such that $p' < p$
    \item[{\tt 6.}] \quad \quad $I^*(p) = I(p') \cup I_1(p',p) \cup I_2(p',p)$
    \item[{\tt 7.}] \quad \quad If $|I(p)| < |I^*(p)|$ then set $I(p) = I^*(p)$
    \item[{\tt 8.}] \quad If $|I(p)| > i_{max}$ then set $I_{max} = I(p)$ and $i_{max} = |I(p)|$
    \item[{\tt }] Return $I_{max}$
\end{itemize}

Since we are given the model and thus we have the interval segments in sorted order, we can find each of the independent sets of subgraphs in the algorithm in $O(n)$ time. This leads to an overall running time of $O(n^3)$.

\section{Conclusion and future work}
\label{sec:conclusion}

In this work we introduced two graph classes, IP-SEG* and IP-SEG, generalizing the classes of permutation and interval graphs, based on their geometric models. We showed that these graph classes have an implicit representation. In addition, we saw that unlike earlier generalizations such as simple triangle and trapezoid graphs, these classes are not contained in the class of perfect graphs. Nonetheless, we are somewhat limited in how we can represent a chordless cycle using an IP-SEG model, which leads to some forbidden subgraphs for the two classes. We have also discussed algorithms for the clique and independent set problems on the classes, when an IP-SEG model is given.

The recognition problem is a natural question that remains open. Finding alternative characterizations would be of value in tackling the recognition problem. In particular, given that interval and permutation graphs have nice vertex ordering characterizations and a similar result has been recently obtained for one of their generalizations - simple triangle graphs - by Takaoka \cite{SimpleTriangle}, it would be worth exploring if such a characterization can be found for IP-SEG* or IP-SEG graphs. Another avenue would be to identify other forbidden subgraphs for the two classes. 

Future work should also be done on studying other optimization problems on the class. A good candidate would be coloring, given the simple algorithms for this problem on interval and permutation graphs arising naturally from their geometric intersection models. Finally, it would be interesting to know if we can design robust algorithms for optimization problems such as clique and independent set, for when the IP-SEG model is not given as part of the input.


\bibliographystyle{acm}  
\bibliography{references}  

\end{document}